\documentclass[reqno]{amsart}

\usepackage{amssymb,epsfig,graphics,fullpage}
\usepackage{amsmath}
\usepackage{amsthm}
\usepackage{amscd}
\usepackage{verbatim}
\input xypic

\def\r{\rangle}
\def\l{\langle}
\def\R{\mathbb R}
\def\Z{\mathbb Z}
\def\a{\alpha}
\def\w{\omega}
\newtheorem{theorem}{Theorem}
\newtheorem{lemma}[theorem]{Lemma}

\def\Imm{\operatorname{Imm}}
\def\stab{\operatorname{stab}}
\def\per{\operatorname{per}}

%-----------------------------------------------------------------
\begin{document}

\title[]
 {On immanant functions related to Weyl groups of $A_n$ }

\author{Lenka H\'akov\'a$^1$, Agnieszka Tereszkiewicz$^2$}

% -----------------------------------------------------------

\begin{abstract}
In this work we recall the definition of matrix immanants, a generalization of the determinant and permanent of a matrix. We use them to generalize families of symmetric and antisymmetric orbit functions related to Weyl groups of the simple Lie algebras $A_n$.
The new functions and their properties are described, in particular we give their continuous orthogonality relations. Several examples are shown.
\end{abstract}

% -----------------------------------------------------------
\maketitle

{\small
\noindent
$^1$ Department of Physics, Faculty of Nuclear Sciences and
Physical Engineering, Czech Technical University in Prague,
B\v{r}ehov\'a~7, 115 19 Prague 1, Czech Republic;\\
$^2$ Institute of Mathematics, University of Bialystok, Akademicka 2, PL-15-267 Bialystok, Poland

\medskip
\noindent
\textit{E-mail:} lenka.hakova@fjfi.cvut.cz, a.tereszkiewicz@uwb.edu.pl
}
% -----------------------------------------------------------

%%%%%%%%%%%%%%%%%%%%%%%%%%%%%%%%%%%%%%%%%%%%%%%%%%%%%%%%%%
\section{Introduction}
%%%%%%%%%%%%%%%%%%%%%%%%%%%%%%%%%%%%%%%%%%%%%%%%%%%%%%%%%%

Immanants, i.e., matrix functions connected to irreducible characters of symmetric groups, were introduced and studied by Littlewood~\cite{Li}. They are widely used in algebraic combinatorics, e.g., immanants of certain matrices whose entries are symmetric functions define the Schur functions. The best known immanants are the determinant and the permanent of a matrix. The determinant is well known from its extended use in linear algebra.  Many results on the permanents can be found in the seminal book of Minc~\cite{Minc}. Other interesting applications are obtained in the connection with positive semi-definite Hermitian matrices. Recently they have been used in a mathematical description of three-channel optical networks~\cite{dG1,dG2}. In~\cite{dG3} the authors use immanants to analyze coincidence rates in connection with BosonSampling BosonSampling computation originally introduced in~\cite{AA}. Immanants also appear in statistical physics, see~\cite{MM}. See also recent experiments on permanents in~\cite{1,2,3,4}.

For every symmetric group $S_n$ there are several immanants defined, their numbers being equal to the number of corresponding conjugacy classes of the group. 
In this work we use immanants to generalize symmetric and antisymmetric orbit functions, families of multivariable complex functions related to Weyl groups of simple Lie algebras $A_n$. These functions can be written as the determinant and permanent of a particular matrix of order $n+1$, see~\cite{NPT}. By considering general immanants of the same matrix we obtain new families of functions. We study their properties, e.g.,  their symmetries with respect to elements of the Weyl group $A_n$, their products and continuous orthogonality.

Symmetric and antisymmetric orbit functions were studied in detail in~\cite{KP1,KP2}. Their orthogonality relations can be found in~\cite{MP06} and their discretization is completely described in~\cite{HrPa01}. Other families of orbit functions are defined in the connection of even Weyl subgroups and so-called sign homomorphisms~\cite{KP3,HHP,HHP2}.

The paper is organized as follows. Chapter 2 reviews some facts from the theory of characters of finite groups and the definition of immanants of a matrix. In Chapter 3 we present the connection between the symmetry group $S_{n+1}$ and the Weyl group of simple Lie algebra $A_n$, see, for example, Refs \cite{NPT,KP1}. Immanants functions of $W(A_n)$ are then defined in Chapter 4, including examples and properties for the particular case $n=2$. The main result lies in Chapter 5, where we generalize some properties of immanants functions and we prove Theorem~\ref{maintheorem} describing continuous orthogonality of immanant functions. We conclude with several remarks in Chapter 6.

\medskip

%%%%%%%%%%%%%%%%%%%%%%%%%%%%%%%%%%%%%%%%%%%%%%%%%%%%%%%%%%
\section{Character tables and immanants}
%%%%%%%%%%%%%%%%%%%%%%%%%%%%%%%%%%%%%%%%%%%%%%%%%%%%%%%%%%

%%%%%%%%%%%%%%%%%%%%%%%%%%%%%%%%%%%%%%%%%%%%%%%%%%%%%%%%%%
\subsection{Irreducible characters of symmetric groups}
%%%%%%%%%%%%%%%%%%%%%%%%%%%%%%%%%%%%%%%%%%%%%%%%%%%%%%%%%%

Immanants are a generalization of the determinant and permanent of a matrix defined using irreducible characters of symmetric groups. In this subsection we list some well known facts about irreducible characters which will be used in what follows. For a good review of the theory of immanants of finite groups see~\cite{JamLie}.

Let $G$ be a finite group. It can be written as a union of its conjugacy classes. Irreducible characters $\chi$ map each class to a complex number. The number of conjugacy classes equals the number of irreducible characters. The values of characters are then listed in so-called character tables, see Table~\ref{characters}.

Row orthogonality relations hold, i.e., for every irreducible characters $\chi_k,\chi_l$ we have
\begin{equation}\label{charOG}
\sum_{g\in G}\chi_k(g)\overline{\chi_l(g)}=|G|\delta_{kl},
\end{equation}
where $|G|$ denotes the order of the group $G$ and $\delta_{kl}$ is the Kronecker delta.

It also holds that for every element $g\in G$, $\chi(g^{-1})=\overline{\chi(g)}$. This implies that if every $g\in G$ is conjugated to its inverse then all the characters have real values. This is a case for example for all the symmetry groups.

Finally, we will need the following formula for the convolution of characters~\cite{Si}:
\begin{equation}\label{conv}
\sum_{g\in G}\chi_k(hg^{-1})\chi_l(g)=\delta_{kl}\frac{|G|}{d_k}\chi_k(h),
\end{equation}
where $d_k$ is the degree of the character $\chi_k$, i.e., $d_k=\chi_k(id)$.

\medskip

%%%%%%%%%%%%%%%%%%%%%%%%%%%%%%%%%%%%%%%%%%%%%%%%%%%%%%%%%%
\subsection{Immanants}
%%%%%%%%%%%%%%%%%%%%%%%%%%%%%%%%%%%%%%%%%%%%%%%%%%%%%%%%%%

Let $\mathcal{A}$ be a matrix of order $n$, $\mathcal{A}=(a_{ij})$. A product of entries of the form $a_{1i_{1}}a_{2i_{2}}\ldots a_{ni_{n}}$ corresponds to the element $\Pi$ of the symmetric group $S_n$ given by $\Pi(1,2,\ldots,n)=(i_1,i_2,\ldots,i_n)$. For each conjugacy class $[\rho]$ of $S_n$ we denote by $C_\rho$ the sum of all product of the matrix entries of the above form. The immanant corresponding to the irreducible character $\chi_k$ of $S_n$ is defined as
\begin{equation}\label{immanantdef}
 \Imm^{n,k}=\sum_{[\rho]} \chi_k(\rho) C_\rho,
\end{equation}
where $\chi_k(\rho)$ denotes the value of the character $\chi_k$ on the conjugacy class $[\rho].$

We list the immanants corresponding to matrices of order $2,3$ and $4$. Necessary character tables are listed in Table~\ref{characters} and can be found for example in~\cite{Li}.

{\small\begin{table}
\begin{tabular}{|c||c|c|}
\hline
 $[\rho]$ & (1) & (12) \\
\hline\hline
$\chi_1$ & 1 &1\\
\hline
$\chi_2$ & 1 & -1\\
\hline
\end{tabular}\hspace{20pt}
\begin{tabular}{|c||c|c|c|}
\hline
 $[\rho]$ & (1) & (12) & (123)\\
\hline\hline
$\chi_1$ & 1 &1&1\\
\hline
$\chi_2$ & 1 & -1&1\\
\hline
$\chi_3$ & 2 & 0&-1\\
\hline
\end{tabular}\hspace{20pt}
\begin{tabular}{|c||c|c|c|c|c|}
\hline
 $[\rho]$ & (1)& (12) & (12)(34) &(123)& (1234)\\
\hline\hline
$\chi_1$ & 1 &1&1&1&1\\
\hline
$\chi_2$ & 1 & -1&1&1&-1\\
\hline
$\chi_3$ & 2 & 0&2&-1&0\\
\hline
$\chi_4$ & 3 & 1&-1&0&-1\\
\hline
$\chi_5$ & 3 & -1&-1&0&1\\
\hline
\end{tabular}
\medskip
\caption{Character tables of $S_2,S_3$ and $S_4$. The standard cycle notation for the conjugacy class representant is used.}\label{characters}
\end{table}}

\begin{itemize}
 \item The group $S_2$ has two irreducible characters. The sums $C_1$ and $C_2$ are equal to
 $$\begin{aligned}
    C_{1}&=a_{11}a_{22},\\
    C_{2}&=a_{12}a_{21}.\\
   \end{aligned}$$
The immanant corresponding to the trivial representation gives in all cases the permanent of the matrix and the alternating representation gives its determinant. In the case of $n=2$ these are the only ones.
 $$\begin{aligned}
    \Imm^{2,1}&=C_{1}+C_{2}=a_{11}a_{22}+a_{12}a_{21},\\
    \Imm^{2,2}&=C_{1}-C_{2}=a_{11}a_{22}-a_{12}a_{21}.\\
   \end{aligned}$$

\item The character table of $S_3$ contains three characters. The sums $C_1, C_2$ and $C_3$ are then
 $$\begin{aligned}
    C_{1}&=a_{11}a_{22}a_{33},\\
    C_{2}&=a_{11}a_{23}a_{32}+a_{12}a_{21}a_{33}+a_{13}a_{22}a_{31},\\
    C_{3}&=a_{12}a_{23}a_{31}+a_{13}a_{21}a_{32}.\\
   \end{aligned}$$
Three irreducible characters induce three immanants. One is non-trivial in the sense that it is not equal to the determinant or permanent of the matrix.
 $$\begin{aligned}
    \Imm^{3,1}&=C_{1}+C_{2}+C_{3},\\
    \Imm^{3,2}&=C_{1}-C_{2}+C_{3},\\
    \Imm^{3,3}&=2C_{1}-C_{3}.\\
   \end{aligned}$$

\item There are five conjugacy classes of the group $S_4$ with the sums $C_1,\ldots,C_5$ equal to
 $$\begin{aligned}
    C_{1}&=a_{11}a_{22}a_{33}a_{44},\\
    C_{2}&=a_{11}a_{22}a_{34}a_{43}+a_{11}a_{23}a_{32}a_{44}+a_{11}a_{24}a_{33}a_{42}+a_{12}a_{21}a_{33}a_{44}\\ & +a_{13}a_{22}a_{31}a_{44}+a_{14}a_{22}a_{33}a_{41},\\
    C_{3}&=a_{12}a_{21}a_{34}a_{43}+a_{13}a_{24}a_{31}a_{42}+a_{14}a_{23}a_{32}a_{41},\\
    C_{4}&=a_{11}a_{23}a_{34}a_{42}+a_{11}a_{24}a_{32}a_{43}+a_{13}a_{22}a_{34}a_{41}+a_{14}a_{22}a_{31}a_{43}\\ & +a_{12}a_{24}a_{33}a_{41}+a_{14}a_{21}a_{33}a_{42}+a_{12}a_{23}a_{31}a_{44}+a_{13}a_{21}a_{32}a_{44},\\
    C_{5}&=a_{12}a_{23}a_{34}a_{41}+a_{12}a_{24}a_{31}a_{43}+a_{13}a_{24}a_{32}a_{41}+a_{13}a_{21}a_{34}a_{42}\\ & +a_{14}a_{23}a_{31}a_{42}+a_{14}a_{21}a_{32}a_{43}.\\
   \end{aligned}$$
There are five immanants, three of them non-trivial.
 $$\begin{aligned}
    \Imm^{4,1}&=C_{1}+C_{2}+C_{3}+C_{4}+C_{5},\\
    \Imm^{4,2}&=C_{1}-C_{2}+C_{3}+C_{4}-C_{5},\\
    \Imm^{4,3}&=2C_{1}+2C_{3}-C_{4},\\
    \Imm^{4,4}&=3C_{1}+C_{2}-C_{3}-C_{5},\\
    \Imm^{4,5}&=3C_{1}-C_{2}-C_{3}+C_{5}.\\
   \end{aligned}$$

\end{itemize}

\medskip

%%%%%%%%%%%%%%%%%%%%%%%%%%%%%%%%%%%%%%%%%%%%%%%%%%%%%%%%%%
\section{Symmetric groups and Weyl groups}
%%%%%%%%%%%%%%%%%%%%%%%%%%%%%%%%%%%%%%%%%%%%%%%%%%%%%%%%%%

%%%%%%%%%%%%%%%%%%%%%%%%%%%%%%%%%%%%%%%%%%%%%%%%%%%%%%%%%%
\subsection{Symmetric group $S_{n+1}$}
%%%%%%%%%%%%%%%%%%%%%%%%%%%%%%%%%%%%%%%%%%%%%%%%%%%%%%%%%%

Let the group $S_{n+1}$ act on the ordered number set $[l_1,l_2,\dots,l_n,l_{n+1}]$. We introduce an orthonormal basis in the real Euclidean space $\mathbb R^{n+1}$,
\begin{equation*}
{e_i}\in\mathbb R^{n+1}\,,\qquad
\l e_i , e_j\r=\delta_{ij}\,,\qquad    1\leq i,j\leq n+1\,,
\end{equation*}
and use the numbers $l_k$ as the coordinates of a point $\lambda$ in $e$-basis:
$$
\lambda=\sum_{k=1}^{n+1}l_ke_k\,,\qquad l_k\in\mathbb R\,.
$$

We consider the orbit of the action of $S_{n+1}$ on a point $\lambda\in\R^{n+1}$ and denote it by $O_{S_{n+1}}(\lambda)$, where  $\lambda$ is the unique point of this orbit such that
\begin{equation}\label{requirement2}
l_1\geq l_2\geq\cdots\geq l_n\geq l_{n+1}\,.
\end{equation}
If there are no coordinates $l_k$ in $\lambda$ equal, the orbit $O_{S_{n+1}}(\lambda)$ consists of
$(n+1)!$ points.

Subsequently we are interested only in points $\mu$ from the $n$-dimensional subspace
${\mathcal H}\subset\mathbb R^{n+1}$ defined by the requirement
\begin{align}\label{requirement}
\sum_{k=1}^{n+1}l_k=0\,.
\end{align}

\medskip

%%%%%%%%%%%%%%%%%%%%%%%%%%%%%%%%%%%%%%%%%%%%%%%%%%%%%%%%%%
\subsection{Lie algebra~$A_n$}
%%%%%%%%%%%%%%%%%%%%%%%%%%%%%%%%%%%%%%%%%%%%%%%%%%%%%%%%%%

Let us recall basic properties of the simple Lie algebra~$A_n$ of the compact Lie group $SU(n+1),$ where $1\leq n<\infty.$
The Coxeter-Dynkin diagram, Cartan matrix $  \mathfrak{C}$, and inverse Cartan matrix $  \mathfrak{C}^{-1}$ of~$A_n$  are the following ones:

\begin{gather*}
\parbox{.6\linewidth}
{\setlength{\unitlength}{1pt}
\def\kr{\circle{10}}
\thicklines
\begin{picture}(20,30)
%% A_2xA_1
%\put(8,24){$1$}
\put(10,14){\kr}
\put(6,0){$\alpha_1$}
\put(15,14){\line(1,0){10}}
%
%\put(28,24){$1$}
\put(30,14){\kr}
\put(26,0){$\alpha_2$}
\put(35,14){\line(1,0){10}}
%
%\put(48,24){$1$}
\put(50,14){\kr}
\put(46,0){$\alpha_3$}
\put(55,14){\line(1,0){10}}
%
%\put(68,24){$1$}
\put(70,13.5){$\ \,\ldots$}
%\put(66,0){$\alpha_4$}
%
\put(95,14){\line(1,0){10}}
%
%\put(108,24){$1$}
\put(110,14){\kr}
\put(104,0){$\alpha_{n\!-\!1}$}
\put(115,14){\line(1,0){10}}
%
%\put(128,24){$1$}
\put(130,14){\kr}
\put(126,0){$\alpha_n$}
\end{picture}
}
\hspace{-50 pt}
%\hspace{-110 pt}
  \mathfrak{C}=\left(\begin{smallmatrix}
                          2&-1&0&0&\dots&0&0&0\\
                         -1&2&-1&0&\dots&0&0&0\\
                          0&-1&2&-1&\dots&0&0&0\\[-1ex]
\vdots&\vdots&\vdots&\vdots&\ddots&\vdots&\vdots&\vdots\\[0.5ex]
                          0&0&0&0&\dots&-1&2&-1\\
                          0&0&0&0&\dots&0&-1&2
                     \end{smallmatrix}\right),
\\[2ex]
  \mathfrak{C}^{-1}=\frac{1}{n+1}\left(\begin{smallmatrix}
1\cdot n\;    &1\cdot(n-1)\;&1\cdot(n-2)\;&\dots    &1\cdot 2\;   &1\cdot 1\\
1\cdot (n-1)\;&2\cdot(n-1)\;&2\cdot(n-2)\;&\dots    &2\cdot 2\;   &2\cdot 1\\
1\cdot (n-2)\;&2\cdot(n-2)\;&3\cdot(n-2)\;&\dots    &3\cdot 2\;   &3\cdot 1\\
\vdots        &\vdots           &\vdots       &\ddots&\vdots            &\vdots\\[0.5ex]
1\cdot 2\;    &2\cdot2\;    &3\cdot2\;       &\dots &(n-1)\cdot2\;&(n-1)\cdot 1\\
1\cdot 1\;    &2\cdot1\;    &3\cdot1\;        &\dots &(n-1)\cdot1\;&n\cdot 1\
                     \end{smallmatrix}\right).
\end{gather*}

The simple roots $\alpha_i$, $1\le i\le n$, of $A_n$ form a basis ($\a$-basis) of a real Euclidean space $\mathbb R^n$ and we choose them in  ${\mathcal H}$:
$$
\alpha_i = e_i - e_{i+1}, \quad i=1,\dots,n\,.
$$
Such a choice fixes the lengths and relative angles of the simple roots.
They are of the same length equal to~$\sqrt 2$ with relative angles between $\alpha_k$ and $\alpha_{k+1}$  $(1\leq k\leq n-1)$
equal to~$\frac{2\pi}{3}$, and $\tfrac\pi2$ for any other pair.

As usually,   we introduce  third basis, the $\w$-basis in ${\mathcal H}\cong\mathbb R^n\subset\mathbb R^{n+1}$,  as the $\mathbb Z$-dual one to the simple roots~$\alpha_i$:
$$
\l\alpha_i,\w_j\r=\delta_{ij}\,,\qquad 1\leq i\leq n.
$$
Vectors $\omega_i$ are called weights.
The set of all integer combinations of weights is denoted by $P$ and called the weight lattice,
$$P=\bigoplus_{i=1}^n\Z\omega_i.$$
The bases $\alpha$ and $\w$ are related by the Cartan matrix:
\begin{equation*}
\alpha=  \mathfrak{C}\w\,,\qquad \w=  \mathfrak{C}^{-1}\alpha.
\end{equation*}

Through the paper we assume $\lambda\in\mathcal H$ and  we fix the notation for its coordinates relative to the $\omega$-basis and euclidean basis:
\begin{equation}\label{notacja}
\lambda =\sum_{j=1}^{n+1}l_je_j=:(l_1,\ldots,l_{n+1})_e=
    \sum_{i=1}^n\lambda_i\w_i=:(\lambda_1,\ldots,\lambda_n)_\w\,, \quad\sum_{i=1}^{n+1}l_i=0.
     \end{equation}
The relations between coordinates in these basis are for example described in detail in~\cite{KP1}.

\medskip

%%%%%%%%%%%%%%%%%%%%%%%%%%%%%%%%%%%%%%%%%%
\subsection{The Weyl group of $A_n$}
%%%%%%%%%%%%%%%%%%%%%%%%%%%%%%%%%%%%%%%%%%

The Weyl group $W(A_n)$ acts on  ${\mathcal H}$ by permuting the coordinates in $e$-basis, i.e., it acts as the group $S_{n+1}$.
Indeed, let $r_i$, $1\le i\le n$ be generating elements  of $W(A_n)$,
reflections with respect to the hyperplanes perpendicular to $\alpha_i$ and passing through the origin. Then  the reflections $r_i$ correspond to adjacent transposition:
\begin{equation}\label{reflection}
\begin{aligned}
r_i \lambda&=\lambda-\langle \lambda,\alpha_i\rangle\alpha_i
=(l_1,l_2,\dots,l_{n+1})_e-(l_i-l_{i+1})(e_i-e_{i+1})\\
&=(l_1,\dots,l_{i-1},l_{i+1},l_i,l_{i+2},\dots,l_{n+1})_e.
\end{aligned}
\end{equation}
Because transpositions generate the full permutation group $S_{n+1}$, thus $W(A_n)$ is isomorphic to $S_{n+1}$, and the points of the orbit $O_{S_{n+1}}(\lambda)$ and $O_{W(A_n)}(\lambda)$ coincide.

The affine Weyl group of $A_n$ is the infinite group which can be described as a semidirect product of the translation by integer combinations of the simple roots $\alpha_i$ and the Weyl group, for more details see~\cite{HrPa01}. Its fundamental domain is a simplex $F$ defined as the convex hull of the vectors $\omega_i$.

\medskip

%%%%%%%%%%%%%%%%%%%%%%%%%%%%%%%%%%
\section{New types of functions for $W(A_n)$}\label{newfunctions}
%%%%%%%%%%%%%%%%%%%%%%%%%%%%%%%%%%%%%%%%%%%%%%%%%%%%%%%%%%

%%%%%%%%%%%%%%%%%%%%%%%%%%%%%%%%%%
\subsection{Correspondence between orbit functions of $W(A_n)$ and immanants}
%%%%%%%%%%%%%%%%%%%%%%%%%%%%%%%%%%%%%%%%%%%%%%%%%%%%%%%%%%

In this paper we are interested in the following matrix $\mathcal{A}:$
\begin{equation*}\label{matrix}
\mathcal{A}=
  \left(
\begin{array}{cccc}
e^{2\pi {\rm i}l_1x_1}&e^{2\pi {\rm i}l_1x_2}&\dots&e^{2\pi {\rm i}l_1x_{n+1}}\\
e^{2\pi {\rm i}l_2x_1}&e^{2\pi {\rm i}l_2x_2}&\dots&e^{2\pi {\rm i}l_2x_{n+1}}\\[-1ex]
\vdots&\vdots&\ddots&\vdots\\[1 ex]
e^{2\pi {\rm i}l_{n+1}x_1}&e^{2\pi {\rm i}l_{n+1}x_2}&\dots&e^{2\pi {\rm i}l_{n+1}x_{n+1}}
\end{array}
\right),
\end{equation*}
where the number $l_i$ fulfil the requirements given by~\eqref{requirement2} and~\eqref{requirement}.

This matrix is related to two families of Weyl group orbit functions of $A_n,$ see \cite{KP1,KP2,MP06,HrPa01}.
The symmetric orbit functions, also called (normalized) $C$-functions, for the group $W(A_n)$, are defined for every $x\in\R^n$
\begin{equation*}\label{cHAT}
\Phi_{\lambda}(x):= \sum_{w \in W(A_n)} e^{2\pi i \l w\lambda, x\r},
\end{equation*}
where $\lambda\in P$. With the notation introduced in~\eqref{notacja}, these functions can be written as the immanant of the matrix $\mathcal{A}$ corresponding to the trivial character.
$$
\Phi_{\lambda}(x)=\per\mathcal{A}=\Imm^{n+1,1}(\lambda,x).
$$
The $C$-orbit functions are invariant with respect to the Weyl group,
\begin{equation}\label{Cinv}
\Phi_{w\lambda}(x)=\Phi_{\lambda}(x)\end{equation}
for every $x\in\R^n$ and $w\in W (A_n)$. Therefore, it is enough to choose $\lambda\in P^+=\bigoplus_{i=1}^n\Z^{\ge 0}\omega_i$. We call such $\lambda$ a dominant point.

The second type are the antisymmetric or (normalized) $S$-orbit functions defined for every $x\in\R^n$ by
\begin{equation*}\label{sHAT}
\varphi_{\lambda}(x)= \sum_{w \in W(A_n)}( \det w)e^{2\pi {\rm i} \l w\lambda, x\r},
\end{equation*}
where  $\lambda\in P^+=\bigoplus_{i=1}^n\Z^{\ge 0}\omega_i$. In this case we have
$$
\varphi_{\lambda}(x)=\det\mathcal{A}=\Imm^{n+1,2}(\lambda,x),
$$
i.e., the function equals the immanant corresponding to the alternating character.
By the symmetries of $S-$orbit functions and the fact that they are identically zero on the boundary of the fundamental domain, we choose $\lambda\in P^{++}=\bigoplus_{i=1}^n\Z^{> 0}\omega_i$. In this case we call $\lambda$ a strictly dominant point.

These two families were studied in detail in several papers, e.g.,  \cite{KP1,KP2,MP06,HrPa01}. For the later use we give their continuous orthogonality relations~\cite{MP06}.

For every $\lambda,\mu\in P^+$ it holds that
\begin{equation}\label{COG}
\int_{F} \Phi_{\lambda}(x)\overline{\Phi_{\mu}(x)} dx=|F||W(A_n)||\stab_{W(A_n)}(\lambda)|\delta_{\lambda\mu},
\end{equation}
where $|F|$ denotes the volume of the fundamental domain $F$, $|W(A_n)|$ is the order of the Weyl group of $A_n$ and $\stab_{W(A_n)}(\lambda)$ is the stabilizer of the action of $W$ on $\lambda$.

Similarly, for every $\lambda,\mu\in P^{++}$ it holds that
\begin{equation}\label{SOG}
\int_{F} \varphi_{\lambda}(x)\overline{\varphi_\mu(x)} dx=|F||W(A_n)|\delta_{\lambda\mu}.
\end{equation}

In particular, if $\mu=0$, the $C$-function becomes a constant equal to $|W(A_n)|$ and the equation~\eqref{COG} gives
$$\int_{F} \Phi_{\lambda}(x)\overline{\Phi_{0}(x)} dx=|W(A_n)|\int_{F} \Phi_{\lambda}(x) dx=|F||W(A_n)|^2\delta_{\lambda 0}.$$
We can rewrite it in terms of immanants as
\begin{equation}\label{ImmOG}
\int_{F} \Imm^{n+1,1}(\lambda,x) dx=|F||W(A_n)|\delta_{\lambda 0}.
\end{equation}

In this work we are interested in the functions coming from the immanants $\Imm^{n+1,k}$, where $k\ge 3$. In this case we can use the correspondence between the reflections and adjacent transpositions given by~\eqref{reflection}. By $\left[\rho\right]$ we can denote not only the permutations belonging to one conjugacy class, but also a subset of the Weyl group generated by elements corresponding to these permutation. The elements can be found by decomposing each permutation into adjacent transpositions.
In terms of Weyl group the formula for immanants \eqref{immanantdef} becomes
\begin{equation*}\label{immanantWeyl}
\Imm^{n+1,k}(\lambda,x)=\sum_{[\rho]}\chi_{k}(\rho)\sum_{w\in[\rho]} e^{2\pi {\rm i} \l w\lambda,x\r}=\sum_{w\in W(A_n)}\chi_{k}(w) e^{2\pi {\rm i} \l w\lambda,x\r}.
\end{equation*}

\medskip

%%%%%%%%%%%%%%%%%%%%%%%%%%%%%%%%%%%%%%%%%%%%%%%%%%%%%%%%%%
\subsection{Weyl group of $A_1$}
%%%%%%%%%%%%%%%%%%%%%%%%%%%%%%%%%%%%%%%%%%%%%%%%%%%%%%%%%%

For this group according to tables of characters there are only two types of immanants:
\begin{eqnarray*}
\Imm^{2,1}(\lambda,x)&=&\Phi_\lambda(x),\\
\Imm^{2,2}(\lambda,x)&=&\varphi_\lambda(x)  .
\end{eqnarray*}
These two functions are up to a constant the cosine and sine functions or equivalently the Chebyshev polynomials.

\medskip

%%%%%%%%%%%%%%%%%%%%%%%%%%%%%%%%%%%%%%%%%%%%%%%%%%%%%%%%%%
\subsection{Weyl group of $A_2$}
%%%%%%%%%%%%%%%%%%%%%%%%%%%%%%%%%%%%%%%%%%%%%%%%%%%%%%%%%%

Group $W(A_2)$ is generated by two reflections, see \eqref{reflection}. It is a group of order $6$ and it can be decomposed in its conjugacy classes as follows:
$$
W(A_2)=\{id\}\cup\{r_1,r_2,r_1r_2r_1\}\cup\{r_1r_2,r_2r_1\}.
$$

For this group according to character tables there are three immanants:
\begin{eqnarray*}
\nonumber\Imm^{3,1}(\lambda,x)&=&\Phi_\lambda(x),\\
\Imm^{3,2}(\lambda,x)&=&\varphi_\lambda(x),\\
\nonumber\Imm^{3,3}(\lambda,x).&&
\end{eqnarray*}
We obtain the cosine and sine functions of two complex variables, see \cite{NPST}, and one new function, which is described in what follows.

Firstly, let us write the function $\Imm^{3,3}(\lambda,x)$ in the explicit form:
 \begin{equation*}
\Imm^{3,3}(\lambda,x)=2e^{2\pi {\rm i} (l_1x_1+l_2x_2+l_3x_3)}-e^{2\pi {\rm i} (l_1x_3+l_2x_1+l_3x_2)}-e^{2\pi {\rm i} (l_1x_3+l_2x_1+l_3x_2)}
 \end{equation*}
or, in terms of Weyl group
 \begin{equation*}
\Imm^{3,3}(\lambda,x)=2e^{2\pi {\rm i} \l\lambda,x\r}-e^{2\pi {\rm i}  \l r_2r_1\lambda,x\r}-e^{2\pi {\rm i} \l r_1r_2\lambda,x\r}.
 \end{equation*}
Unlike the $C-$ and $S-$ functions, this function is not invariant under the action of Weyl group. Nevertheless, the following relations hold
\begin{alignat*}{2}
\nonumber\Imm^{3,3}(w\lambda,x)&=\Imm^{3,3}(\lambda,w^{-1}x),\\
\Imm^{3,3}(\lambda,x)&=\Imm^{3,3}(w\lambda,w x),\\
\nonumber\Imm^{3,3}(\lambda,x)&=\Imm^{3,3}(x,\lambda).
\end{alignat*}

For the product with $C-$ and $S-$ functions we get by direct computation
\begin{equation*}
\Imm^{3,3}(\lambda,x)\Phi_{\mu}(x)=\sum_{w\in W(A_2)}\Imm^{3,3}(\lambda+w\mu,x),
\end{equation*}

\begin{equation*}
\Imm^{3,3}(\lambda,x)\varphi_{\mu}(x)=\sum_{w\in W(A_2)}\det w \;\Imm^{3,3}(\lambda+w\mu,x).
\end{equation*}
These relations come from general formula for product of immanants~\eqref{product}.

One of the most remarkable properties of orbit function is their continuous and discrete orthogonality. The main theorem describing the continuous orthogonality relations of immanant functions and its proof are contained in the chapter~\ref{orthogonality}. The following theorem is a special case for the functions $\Imm^{3,3}$.

\begin{theorem}
Let $F$ be the fundamental domain of the affine Weyl group of $A_2$ and we  denote the region $\widetilde{F}:=\bigcup\limits_{w\in W(A_2)}wF$. Then for every $0\neq\lambda,\mu\in P^{+}$ we have
\begin{equation*}
\int\limits_{\widetilde{F}} \Imm^{3,3}(\lambda,x)\overline{\Imm^{3,3}(\mu, x) } d x=|F||W(A_2)|^2\;\delta_{\lambda\mu}.
\end{equation*}
\end{theorem}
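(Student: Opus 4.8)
<br />

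The plan is to compute the integral directly by expanding $\Imm^{3,3}$ as a sum over the Weyl group using the character formula established in the excerpt, namely $\Imm^{n+1,k}(\lambda,x)=\sum_{w\in W(A_n)}\chi_k(w)e^{2\pi\mathrm{i}\l w\lambda,x\r}$. Writing out the product of $\Imm^{3,3}(\lambda,x)$ and $\overline{\Imm^{3,3}(\mu,x)}$ gives a double sum
\begin{equation*}
\Imm^{3,3}(\lambda,x)\overline{\Imm^{3,3}(\mu,x)}=\sum_{w,w'\in W(A_2)}\chi_3(w)\overline{\chi_3(w')}\,e^{2\pi\mathrm{i}\l w\lambda-w'\mu,x\r},
\end{equation*}
so the task reduces to integrating each exponential $e^{2\pi\mathrm{i}\l w\lambda-w'\mu,x\r}$ over the enlarged region $\widetilde F$.

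\medskip

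The first key step is to understand the integral of a single exponential over $\widetilde F=\bigcup_{w\in W(A_2)}wF$. Because $\widetilde F$ is a union of Weyl-images of the fundamental domain and the exponentials are $\Z$-periodic in the appropriate sense, I would exploit the fact that integrating $e^{2\pi\mathrm{i}\l\nu,x\r}$ over $\widetilde F$ is equivalent to integrating $\sum_{u\in W(A_2)}e^{2\pi\mathrm{i}\l u\nu,x\r}$ over the single fundamental domain $F$ (by the change of variables $x\mapsto u^{-1}x$ on each piece $uF$ and invariance of the scalar product). This reduces the computation over $\widetilde F$ to an orthogonality statement over $F$ of the kind already recorded for $C$-functions in~\eqref{COG} and~\eqref{ImmOG}; in particular $\int_{\widetilde F}e^{2\pi\mathrm{i}\l\nu,x\r}\,dx$ will vanish unless $\nu=0$ on the relevant lattice and otherwise contribute $|F|$ times a combinatorial factor.

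\medskip

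The second step is to assemble these single-exponential integrals back into the double sum. After substitution, the orthogonality forces the surviving terms to be those with $w\lambda=w'\mu$; since $\lambda,\mu\in P^+$ are dominant and nonzero, this happens precisely when $\lambda=\mu$ and $w,w'$ lie in the same coset of the stabilizer. The remaining sum over $w,w'$ of $\chi_3(w)\overline{\chi_3(w')}$ is then governed by the row orthogonality relation~\eqref{charOG} for the character $\chi_3$ of $S_3$, which produces the factor $|W(A_2)|$; combined with a second factor of $|W(A_2)|$ coming from the enlargement of $F$ to $\widetilde F$, this yields the claimed constant $|F||W(A_2)|^2$. The convolution formula~\eqref{conv} may give a cleaner route to the same bookkeeping.

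\medskip

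The main obstacle I expect is the careful treatment of the region $\widetilde F$ together with the stabilizer contributions: for a nonzero dominant $\lambda$ with a nontrivial stabilizer the images $w\lambda$ are not all distinct, so the reduction of $\int_{\widetilde F}$ to $\int_F$ and the counting of coincidences $w\lambda=w'\mu$ must be handled so that the stabilizer factors cancel correctly against the character orthogonality sum, leaving exactly $|W(A_2)|^2$ rather than a $\lambda$-dependent constant. Verifying that $\chi_3$ behaves compatibly with this stabilizer structure—i.e.\ that the value $\chi_3(w)$ is constant on the cosets that matter—is the delicate point, and it is presumably where the restriction to $\lambda\neq 0$ enters.
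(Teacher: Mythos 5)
Your overall route coincides with the paper's: the paper proves the general $A_n$ statement (Theorem~\ref{maintheorem}) by exactly this scheme --- unfold $\int_{\widetilde{F}}$ into a sum of integrals over $F$ via the symmetries~\eqref{sym}, reduce everything to $\int_F\Imm^{n+1,1}(\nu,x)\,dx=|F|\,|W(A_n)|\,\delta_{\nu 0}$ (equation~\eqref{ImmOG}), and finish with the character convolution~\eqref{conv} --- and the $A_2$ theorem is then read off as a special case. Your expansion into exponentials, the reduction over $\widetilde{F}$, and the identification of the surviving terms ($\lambda=\mu$ with $w,w'$ in a common coset of the stabilizer) are all correct and match that scheme, modulo the cosmetic difference that the paper packages the class-function bookkeeping into Lemma~\ref{lemma}.

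The genuine gap is that you stop exactly at the computation that makes the constant equal to $|F|\,|W(A_2)|^2$ for \emph{every} nonzero dominant $\lambda$, and the criterion you propose for closing it would fail. After orthogonality the surviving quantity is
\begin{equation*}
|F|\,|W(A_2)|\,\delta_{\lambda\mu}\sum_{s\in\stab_{W(A_2)}(\lambda)}\;\sum_{w\in W(A_2)}\chi_3(w)\chi_3(ws),
\end{equation*}
and you suggest verifying that ``$\chi_3(w)$ is constant on the cosets that matter''. It is not: for $\stab_{W(A_2)}(\lambda)=\{id,r_2\}$ the coset $\{id,r_2\}$ carries the values $\chi_3(id)=2$ and $\chi_3(r_2)=0$. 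The correct mechanism is the convolution identity~\eqref{conv}, which you mention only as an optional ``cleaner route'': since $\chi_3(ws)=\chi_3(sw)$ (conjugate elements) and $\chi_3(w)=\chi_3(w^{-1})$ (real characters), one gets $\sum_{w}\chi_3(w)\chi_3(ws)=\tfrac{|W(A_2)|}{d_3}\chi_3(s)$ with $d_3=\chi_3(id)=2$, so the double sum equals $\tfrac{|W(A_2)|}{d_3}\sum_{s\in\stab_{W(A_2)}(\lambda)}\chi_3(s)$. One then needs the two $A_2$-specific facts: a nonzero dominant $\lambda$ has stabilizer either $\{id\}$ or $\{id,r_i\}$, and $\chi_3$ vanishes on transpositions; hence $\sum_{s\in\stab_{W(A_2)}(\lambda)}\chi_3(s)=2=d_3$ in both cases and the factor is exactly $|W(A_2)|$, independent of $\lambda$. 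This independence is special to $\chi_3$: for $\chi_1$ the same sum gives $|\stab_{W(A_2)}(\lambda)|$ (compare the stabilizer factor in~\eqref{COG} and in Theorem~\ref{maintheorem}), and for $\chi_2$ it gives $0$ on the walls. Your guess about where $\lambda\neq 0$ enters is essentially right: for $\lambda=0$ the stabilizer is all of $W(A_2)$ and $\sum_{w\in W(A_2)}\chi_3(w)=0$ (equivalently $\Imm^{3,3}(0,x)\equiv 0$), so the stated identity would fail at $\lambda=\mu=0$.
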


\medskip

%%%%%%%%%%%%%%%%%%%%%%%%%%%%%%%%%%%%%%%%%%%%%%%%%%%%%%%%%%
\subsection{Examples of immanants of $W(A_2)$}
%%%%%%%%%%%%%%%%%%%%%%%%%%%%%%%%%%%%%%%%%%%%%%%%%%%%%%%%%%

The function $\Imm^{3,3}(\lambda,x)$ is the zero function when $\lambda$ is the zero vector. For every other choice of $\lambda$ it is a function with non-zero real and imaginary part. In this paragraph we present several examples by plotting the real and the imaginary part of the function (see Figs.~\ref{fig1} and~\ref{fig2}).

\begin{figure}[ht!]
  \centering
 \includegraphics[width=0.3\textwidth]{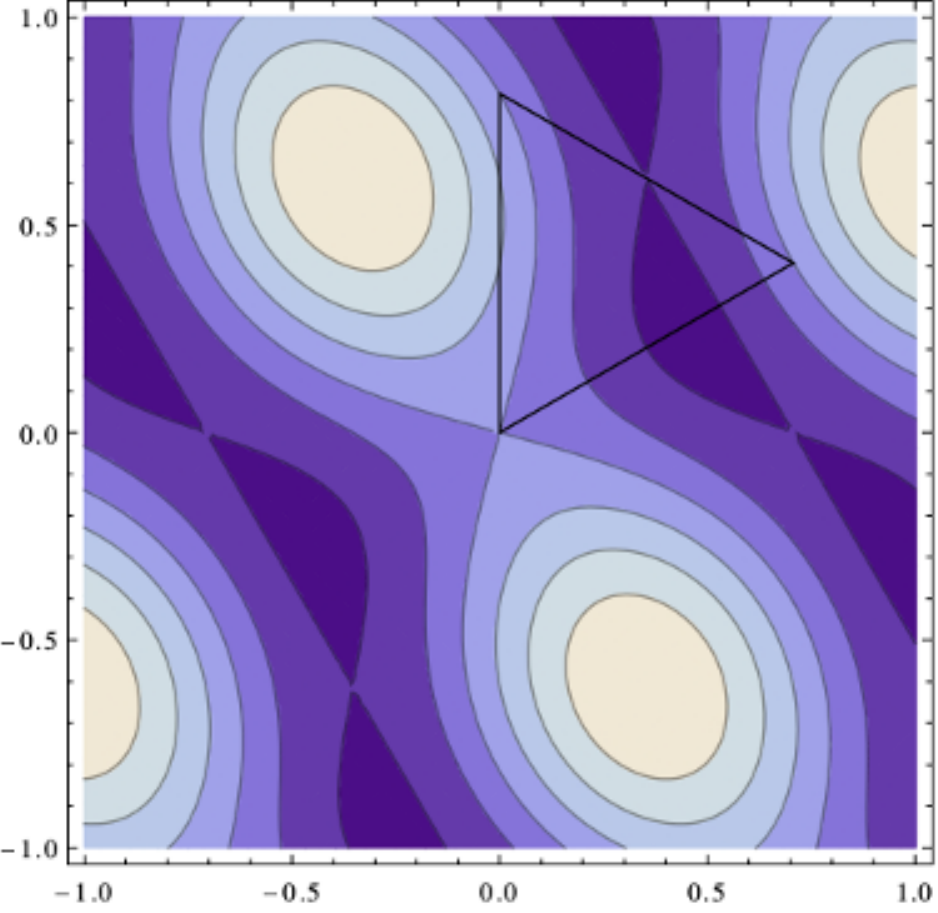}
  \includegraphics[width=0.3\textwidth]{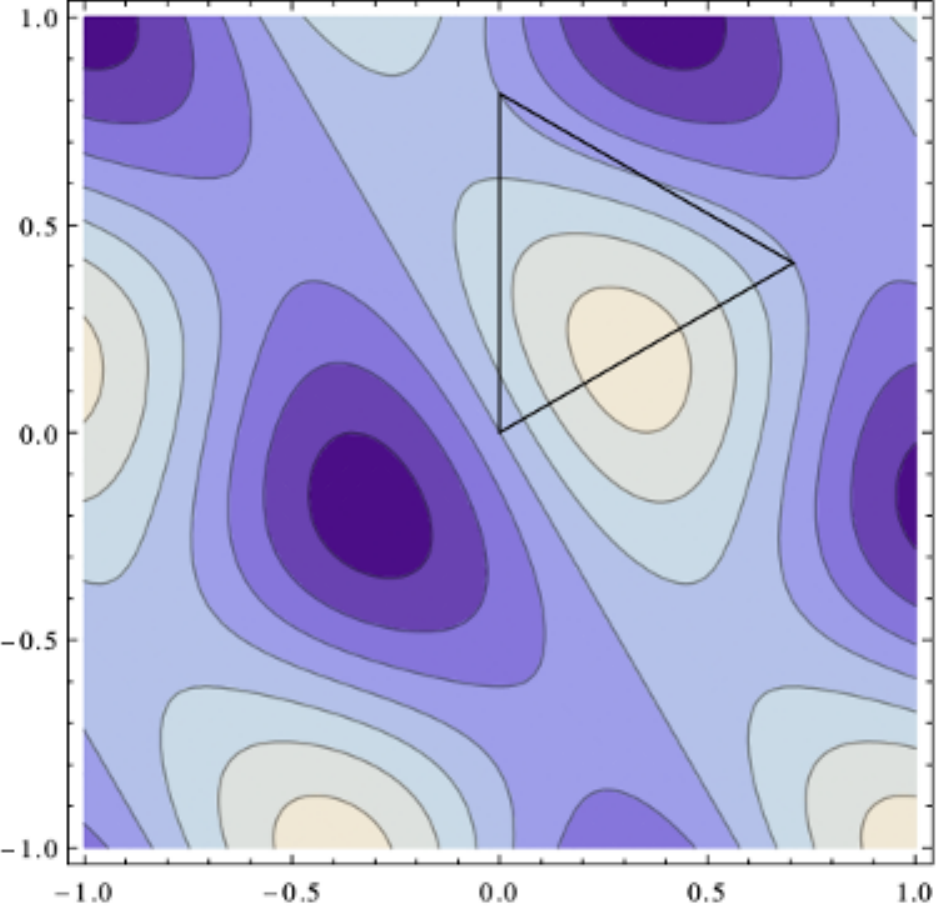}
  \caption{The contour plot of the real part (left) and the imaginary part (right) of the function $\Imm^{3,3}((1,0),x)$. The triangle denotes the fundamental domain $F$ of the affine Weyl group $W(A_2)$.}\label{fig1}
\end{figure}

\begin{figure}[ht!]
  \centering
  \includegraphics[width=0.3\textwidth]{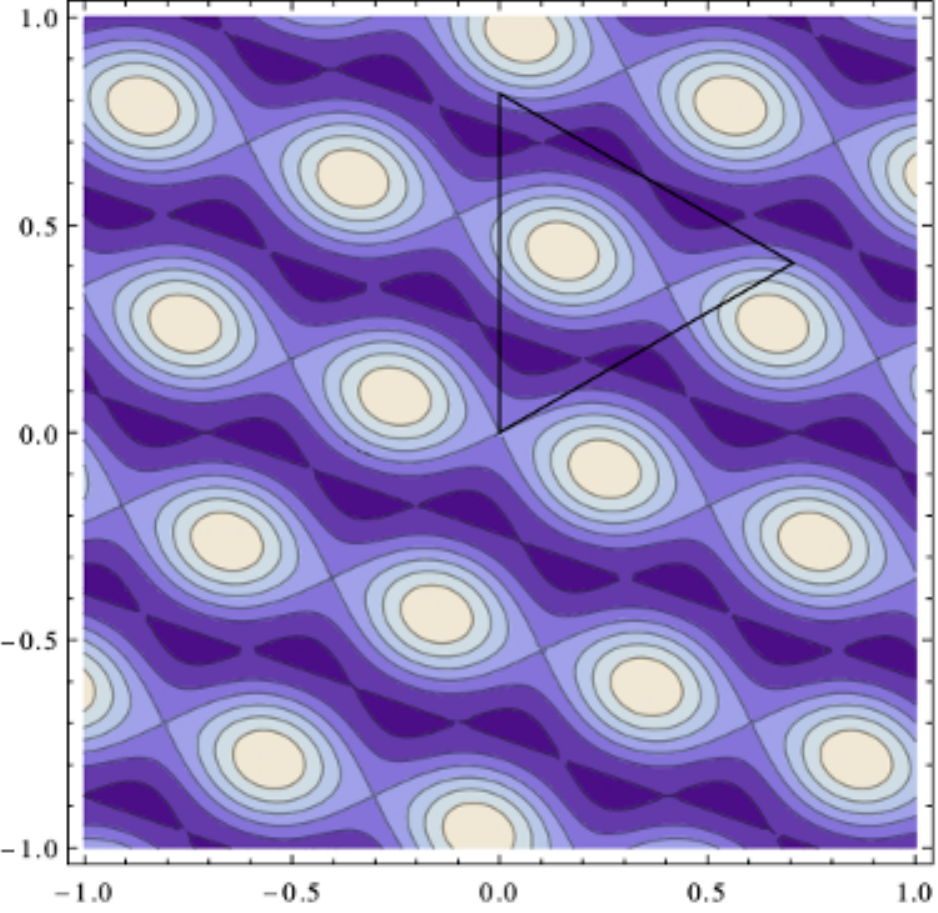}
 \includegraphics[width=0.3\textwidth]{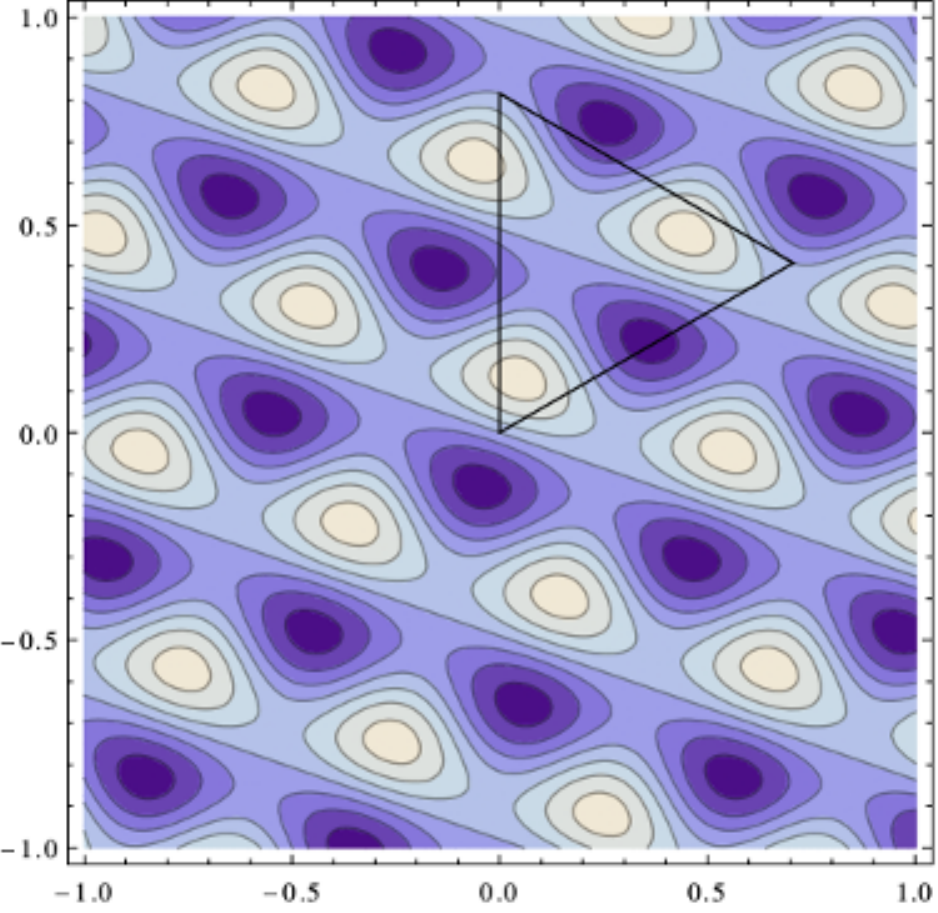}
  \caption{The contour plot of the real part (left) and the imaginary part (right) of the function $\Imm^{3,3}((1,2),x)$. The triangle denotes the fundamental domain $F$ of the affine Weyl group $W(A_2)$.}\label{fig2}
\end{figure}

\medskip

%%%%%%%%%%%%%%%%%%%%%%%%%%%%%%%%%%%%%%%%%%%%%%%%%%%%%%%%%%%
\subsection{Weyl group of $A_3$}
%%%%%%%%%%%%%%%%%%%%%%%%%%%%%%%%%%%%%%%%%%%%%%%%%%%%%%%%%%

Group $W(A_3)$ is generated by three reflections, see \eqref{reflection}. It is a group of order $24$ and it can be decomposed into conjugacy classes as follows:
$$
\begin{aligned}
W(A_3)&=\{id\}\cup \{r_1,r_2,r_3,r_1r_2r_1,r_2r_3r_2,r_1r_2r_3r_2r_1\} \cup \{r_1r_3,r_2r_1r_3r_2,r_3r_2r_3r_1r_2r_3\}\\
&\cup \{r_1r_2,r_2r_1, r_2r_3,r_3r_2,r_1r_3r_2r_1, r_1r_2r_1r_3, r_2r_3r_2r_1,r_1r_2r_3r_2\}\\ &\cup \{r_1r_2r_3,r_2r_3r_1,r_3r_1r_2,r_3r_2r_1,r_3r_2r_3r_1r_2,r_1r_2r_1r_3r_2\}.
\end{aligned}$$

There are five immanants corresponding to the symmetric group $S_4$:
\begin{eqnarray*}
\Imm^{4,1}(\lambda,x)&=&\Phi_\lambda(x),\\
\Imm^{4,2}(\lambda,x)&=&\varphi_\lambda(x),\\
\Imm^{4,3}(\lambda,x),&&\\
\Imm^{4,4}(\lambda,x),&&\\
\Imm^{4,5}(\lambda,x).&&
\end{eqnarray*}
We obtain the cosine and sine functions of three complex variables, see \cite{NPST}, and three new functions. The implicit forms in the terms of Weyl groups are given below:
$$\begin{aligned}
\Imm^{4,3}(\lambda,x)&= 2e^{2\pi {\rm i} \l\lambda,x\r} +2e^{2\pi {\rm i} \l r_1r_3\lambda,x\r} +2e^{2\pi {\rm i} \l r_2r_1r_3r_2\lambda,x\r} +2e^{2\pi {\rm i} \l r_3r_2r_3r_1r_2r_3\lambda,x\r}\\
&-e^{2\pi {\rm i} \l r_1r_2\lambda,x\r} -e^{2\pi {\rm i} \l r_2r_1\lambda,x\r} -e^{2\pi {\rm i} \l r_2r_3\lambda,x\r}-e^{2\pi {\rm i} \l r_3r_2\lambda,x\r}\\
& -e^{2\pi {\rm i} \l r_1r_3r_2r_1\lambda,x\r} -e^{2\pi {\rm i} \l r_1r_2r_1r_3\lambda,x\r} -e^{2\pi {\rm i} \l r_2r_3r_2r_1\lambda,x\r} -e^{2\pi {\rm i} \l r_1r_2r_3r_2\lambda,x\r},\\
\Imm^{4,4}(\lambda,x)&= 3e^{2\pi {\rm i} \l\lambda,x\r} + e^{2\pi {\rm i} \l r_1\lambda,x\r} +e^{2\pi {\rm i} \l r_2\lambda,x\r} +e^{2\pi {\rm i} \l r_3\lambda,x\r} +e^{2\pi {\rm i} \l r_1r_2r_1\lambda,x\r} +e^{2\pi {\rm i} \l r_2r_3r_2\lambda,x\r}\\
& + e^{2\pi {\rm i} \l r_1r_2r_3r_2r_1\lambda,x\r}- e^{2\pi {\rm i} \l r_1r_3 \lambda,x\r}- e^{2\pi {\rm i} \l r_2r_1r_3r_2\lambda,x\r} -e^{2\pi {\rm i} \l r_3r_2r_3r_1r_2r_3\lambda,x\r} - e^{2\pi {\rm i} \l r_1r_2r_3\lambda,x\r}\\ &-e^{2\pi {\rm i} \l r_2r_3r_1\lambda,x\r} -e^{2\pi {\rm i} \l r_3r_1r_2\lambda,x\r}- e^{2\pi {\rm i} \l r_3r_2r_1\lambda,x\r} -e^{2\pi {\rm i} \l r_3r_2r_3r_1r_2\lambda,x\r} -e^{2\pi {\rm i} \l r_1r_2r_1r_3r_2\lambda,x\r} , \\
\Imm^{4,5}(\lambda,x)&= 3e^{2\pi {\rm i} \l\lambda,x\r} - e^{2\pi {\rm i} \l r_1\lambda,x\r} -e^{2\pi {\rm i} \l r_2\lambda,x\r} -e^{2\pi {\rm i} \l r_3\lambda,x\r} -e^{2\pi {\rm i} \l r_1r_2r_1\lambda,x\r} -e^{2\pi {\rm i} \l r_2r_3r_2\lambda,x\r}\\
& - e^{2\pi {\rm i} \l r_1r_2r_3r_2r_1\lambda,x\r}- e^{2\pi {\rm i} \l r_1r_3 \lambda,x\r}- e^{2\pi {\rm i} \l r_2r_1r_3r_2\lambda,x\r} -e^{2\pi {\rm i} \l r_3r_2r_3r_1r_2r_3\lambda,x\r} + e^{2\pi {\rm i} \l r_1r_2r_3\lambda,x\r}\\ &+e^{2\pi {\rm i} \l r_2r_3r_1\lambda,x\r} +e^{2\pi {\rm i} \l r_3r_1r_2\lambda,x\r}+ e^{2\pi {\rm i} \l r_3r_2r_1\lambda,x\r} +e^{2\pi {\rm i} \l r_3r_2r_3r_1r_2\lambda,x\r} +e^{2\pi {\rm i} \l r_1r_2r_1r_3r_2\lambda,x\r} . \\
\end{aligned}$$
 The symmetry properties and orthogonality relations follows from general results given in Chapter~\ref{CH5}.
\medskip

%%%%%%%%%%%%%%%%%%%%%%%%%%%%%%%%%%%%%%%%%%%%%%%%%%%%%%%%%%
\section{Immanants of the Weyl group of $A_{n}$}\label{CH5}
%%%%%%%%%%%%%%%%%%%%%%%%%%%%%%%%%%%%%%%%%%%%%%%%%%%%%%%%%%

%%%%%%%%%%%%%%%%%%%%%%%%%%%%%%%%%%%%%%%%%%%%%%%%%%%%%%%%%%
\subsection{General properties of immanants of $W(A_{n})$}
%%%%%%%%%%%%%%%%%%%%%%%%%%%%%%%%%%%%%%%%%%%%%%%%%%%%%%%%%%

Weyl group $W(A_{n})$ is generated by $n$ reflections and is isomorphic to the symmetry group $S_{n+1}$, with order $|W(A_{n})|=(n+1)!.$ Let $n_\chi$ denote the numbers of its conjugacy classes, i.e., also the number of corresponding irreducible characters and immanant functions. Two of them are the $C$ and $S$-orbit functions.

In what follows, we can suppose only $\lambda\neq 0$. Indeed, directly from the orthogonality of characters~\eqref{charOG} we see that for $\lambda=0$ and $k\in\{2,\ldots, n_\chi\}$, the immanant function $\Imm^{n+1,k}(\lambda,x)=0$ for every $x$; in the case of $k=1$ we have $\Imm^{n+1,1}(0,x)=|W(A_n)|$.

For every $k\in\{1,\ldots,n_\chi\}$ the following symmetries hold
\begin{equation}\label{sym}
\begin{array}{l}
\Imm^{n+1,k}(\omega\lambda,x)=\Imm^{n+1,k}(\lambda,\omega^{-1}x),\\
\Imm^{n+1,k}(\omega\lambda,\omega x)=\Imm^{n+1,k}(\lambda,x),\\
\Imm^{n+1,k}(\lambda,x)=\Imm^{n+1,k}(x,\lambda).
\end{array}
\end{equation}

We are interested in products of two different immanants. Direct computation give the following result. For every $\lambda,\mu\in P^+$ and $k\in\{1,\ldots,n_\chi\}$,
\begin{equation}\label{product}
\Imm^{n+1,k}(\lambda,x)\Imm^{n+1,l}(\mu,x) =\sum_{w,\tilde{w}\in W(A_n)}\chi_k(w)\chi_l(\tilde{w})  e^{2\pi {\rm i} \l\ w \lambda +\tilde{w}\mu,x\r}.
\end{equation}
In the case $l=1$ we get
\begin{equation*}
\Imm^{n+1,k}(\lambda,x)\Imm^{n+1,1}(\mu,x) =\sum_{\tilde{w}\in W(A_n)}\Imm^{n+1,k}(\lambda+\tilde{w}\mu,x).
\end{equation*}

For the proof of the orthogonality of immanants we need the following lemma.
\begin{lemma}\label{lemma}
Let $0\neq \lambda,\mu\in P^+$ and $k,l\in\{1,\ldots,n_\chi\}$. Then
\begin{equation*}
\begin{aligned}
\sum_{w\in W(A_n)}\Imm^{n+1,k}(w\lambda,x)\Imm^{n+1,l}(w\mu,x) &=\sum_{\tilde{w},\hat{w}\in W(A_n)}\chi_k(\tilde{w})\chi_l(\hat{w})\Imm^{n+1,1}(\lambda+\tilde{w}\hat{w}\mu,x),\\
\sum_{w\in W(A_n)}\Imm^{n+1,k}(w\lambda,x)\overline{\Imm^{n+1,l}}(w\mu,x) &=\sum_{\tilde{w},\hat{w}\in W(A_n)}\chi_k(\tilde{w})\chi_l(\hat{w})\Imm^{n+1,1}(\lambda-\tilde{w}\hat{w}\mu,x).\\
\end{aligned}
\end{equation*}
\end{lemma}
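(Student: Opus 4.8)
The plan is to reduce both sides of the first identity to one and the same multiple sum of exponentials over the Weyl group, using only the expansion of the immanant functions, the product formula~\eqref{product}, the convolution formula~\eqref{conv}, and the reality of the characters of $S_{n+1}$ (so that $\chi_l(g)=\overline{\chi_l(g)}=\chi_l(g^{-1})$). Throughout I write $\Imm^{n+1,k}(\nu,x)=\sum_{u\in W(A_n)}\chi_k(u)e^{2\pi{\rm i}\langle u\nu,x\rangle}$, and I use that $\Imm^{n+1,1}(\nu,x)=\sum_{s\in W(A_n)}e^{2\pi{\rm i}\langle s\nu,x\rangle}$ since the trivial character is identically $1$.

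First I would expand the left-hand side. Applying~\eqref{product} with $\lambda,\mu$ replaced by $w\lambda,w\mu$ turns the summand into $\sum_{a,b\in W(A_n)}\chi_k(a)\chi_l(b)e^{2\pi{\rm i}\langle aw\lambda+bw\mu,x\rangle}$. I would then shift the outer variable $w\mapsto a^{-1}w$ so that the $\lambda$-exponent becomes $w\lambda$, and set $c=ba^{-1}$ (so $b=ca$), which brings the left-hand side to $\sum_{w,c}\bigl(\sum_{a}\chi_k(a)\chi_l(ca)\bigr)e^{2\pi{\rm i}\langle w\lambda+cw\mu,x\rangle}$. For the right-hand side I would expand $\Imm^{n+1,1}(\lambda+\tilde w\hat w\mu,x)$ into its sum over $s\in W(A_n)$ and reindex the pair $(\tilde w,\hat w)$ by $m=\tilde w\hat w$ (so $\hat w=\tilde w^{-1}m$), obtaining $\sum_{s,m}\bigl(\sum_{\tilde w}\chi_k(\tilde w)\chi_l(\tilde w^{-1}m)\bigr)e^{2\pi{\rm i}\langle s\lambda+sm\mu,x\rangle}$.

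Next I would evaluate the two bracketed character sums with~\eqref{conv}. Using reality, $\sum_a\chi_k(a)\chi_l(ca)=\sum_a\chi_k(a)\chi_l(a^{-1}c^{-1})$, which by~\eqref{conv} (with $h=c^{-1}$) equals $\delta_{kl}\frac{|W(A_n)|}{d_k}\chi_k(c^{-1})=\delta_{kl}\frac{|W(A_n)|}{d_k}\chi_k(c)$; likewise $\sum_{\tilde w}\chi_k(\tilde w)\chi_l(\tilde w^{-1}m)=\delta_{kl}\frac{|W(A_n)|}{d_k}\chi_k(m)$. Hence both sides collapse to $\delta_{kl}\frac{|W(A_n)|}{d_k}$ times a sum of the same shape, and matching them by setting $s=w$ and $m=w^{-1}cw$ finishes the argument: since $\chi_k$ is a class function one has $\chi_k(m)=\chi_k(c)$, while $s\lambda=w\lambda$ and $sm\mu=cw\mu$, so the summands coincide and $(w,c)\mapsto(w,w^{-1}cw)$ is a bijection of $W(A_n)^2$. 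This proves the first identity.

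Finally, the second identity follows by the identical computation once one observes, again from reality of $\chi_l$, that $\overline{\Imm^{n+1,l}}(w\mu,x)=\sum_v\chi_l(v)e^{-2\pi{\rm i}\langle vw\mu,x\rangle}$; this merely replaces $\mu$ by $-\mu$ in the $\mu$-exponent and so turns $\lambda+\tilde w\hat w\mu$ into $\lambda-\tilde w\hat w\mu$ on the right. I expect the main obstacle to be purely bookkeeping: carrying the left and right translations and the inverses correctly through the two reindexings so that each character sum lands in exactly the form demanded by~\eqref{conv}, and then aligning the two reduced sums via the class-function property rather than term by term. A term-by-term identification genuinely fails here, because the two sides agree only after the convolution is carried out (indeed both collapse, for $k\neq l$, to zero by character orthogonality), so the role of~\eqref{conv} is essential and cannot be bypassed by a direct rearrangement.
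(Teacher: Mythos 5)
Your proof is correct, but it follows a genuinely different route from the paper's. The paper proves the lemma by a pure rearrangement: it expands the left-hand side into a triple sum, rewrites the exponent as $w\left(w^{-1}\tilde{w}w\lambda+w^{-1}\hat{w}w\mu\right)$, reindexes $\tilde{w},\hat{w}$ by conjugation with $w$ (legitimate because characters are class functions), recognizes the resulting inner sum over $w$ as $\Imm^{n+1,1}\left(\tilde{w}\lambda+\hat{w}\mu,x\right)$, and then uses the $W$-invariance \eqref{Cinv} of $\Imm^{n+1,1}$ together with $\chi_k(\tilde{w})=\chi_k(\tilde{w}^{-1})$ to arrive at $\sum_{\tilde{w},\hat{w}}\chi_k(\tilde{w})\chi_l(\hat{w})\Imm^{n+1,1}(\lambda+\tilde{w}\hat{w}\mu,x)$. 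The convolution identity \eqref{conv} never enters; the paper saves it for the proof of Theorem~\ref{maintheorem}, where it is applied after integration. You instead collapse both sides with \eqref{conv} and then match the collapsed expressions via the bijection $(w,c)\mapsto(w,w^{-1}cw)$. This is valid, and it buys you an extra observation the paper's proof does not make explicit: both sides of the lemma vanish identically when $k\neq l$, so part of the $\delta_{kl}$ in the main theorem is already visible at the level of the lemma. The cost is that your proof of what is really a combinatorial rearrangement identity invokes character orthogonality, which is logically heavier than necessary.

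One remark in your last paragraph is, however, wrong: you assert that a term-by-term identification ``genuinely fails'' and that \eqref{conv} ``cannot be bypassed by a direct rearrangement.'' The paper's proof is precisely such a direct rearrangement. Concretely, in your own normal forms the bijection $(w,a,c)\mapsto(s,\tilde{w},m)=\left(w,\,w^{-1}a^{-1}w,\,w^{-1}cw\right)$ matches summands exactly: one has $\chi_k(w^{-1}a^{-1}w)=\chi_k(a)$ and $\chi_l(\tilde{w}^{-1}m)=\chi_l(w^{-1}acw)=\chi_l(ac)=\chi_l(ca)$ by the class-function property and reality, while the exponents agree since $sm\mu=cw\mu$. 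The vanishing of both sides for $k\neq l$ is not an obstruction to a bijective matching --- a bijection between index sets with equal summands works regardless of the value of the sum.
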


\begin{proof}
We have
$$\begin{aligned}
\sum_{w\in W(A_n)}\Imm^{n+1,k}(w\lambda,x)\Imm^{n+1,l}(w\mu,x)&=
\sum_{w,\tilde{w},\hat{w}\in W(A_n)} \chi_k(\tilde{w})\chi_l(\hat{w})e^{2\pi {\rm i} \l\left ( \tilde{w}w\lambda+\hat{w}w\mu\right ),x\r}\\
&=
\sum_{w,\tilde{w},\hat{w}\in W(A_n)} \chi_k(\tilde{w})\chi_l(\hat{w})e^{2\pi {\rm i} \l w\left ( w^{-1}\tilde{w}w\lambda+w^{-1}\hat{w}w\mu\right ),x\r}.\\
\end{aligned}$$

As the characters are class functions, we have $\chi\left (w^{-1}\tilde{w}w\right )=\chi\left (\tilde{w}\right )$ and $\chi\left (w^{-1}\hat{w}w\right )=\chi\left (\hat{w}\right )$, so we can write
$$\begin{aligned}
\sum_{w\in W(A_n)}\Imm^{n+1,k}(w\lambda,x)\Imm^{n+1,l}(w\mu,x)&=
\sum_{\tilde{w},\hat{w}\in W(A_n)} \chi_k(\tilde{w})\chi_l(\hat{w})\sum_{w\in W(A_n)}e^{2\pi {\rm i} \l w\left ( \tilde{w}\lambda+\hat{w}\mu\right ),x\r}\\
&=
\sum_{\tilde{w},\hat{w}\in W(A_n)} \chi_k(\tilde{w})\chi_l(\hat{w})\Imm^{n+1,1}\left (\tilde{w}\lambda+\hat{w}\mu,x\right )\\
&=
\sum_{\tilde{w},\hat{w}\in W(A_n)} \chi_k(\tilde{w})\chi_l(\hat{w})\Imm^{n+1,1}\left (\tilde{w}\left (\lambda+\tilde{w}^{-1}\hat{w}\mu\right ),x\right ).\\
\end{aligned}$$

Where the last equivalency is due to the invariance of $\Imm^{n+1,1}$, or the $C$-orbit functions, given in~\eqref{Cinv}. Using the fact, that $\chi_{k}(\tilde{w})=\chi_k(\tilde{w}^{-1})$ and simplifying the sum limits we finally get
$$\sum_{w\in W(A_n)}\Imm^{n+1,k}(w\lambda,x)\Imm^{n+1,l}(w\mu,x)=
\sum_{\tilde{w},\hat{w}\in W(A_n)} \chi_k(\tilde{w})\chi_l(\hat{w})\Imm^{n+1,1}\left (\lambda+\tilde{w}\hat{w}\mu,x\right ).$$

The proof of the second formula is analogous. We use the fact that the characters of symmetric groups are all real, therefore the complex conjugacy only changes the sign in the first argument of the immanant $\Imm^{n+1,1}$.
\end{proof}

\medskip

%%%%%%%%%%%%%%%%%%%%%%%%%%%%%%%%%%%%%%%%%%%%%%%%%%%%%%%%%%
\subsection{Continuous orthogonality of immanants of $W(A_{n})$}\label{orthogonality}
%%%%%%%%%%%%%%%%%%%%%%%%%%%%%%%%%%%%%%%%%%%%%%%%%%%%%%%%%%

The main result of the paper is the following theorem describing the continuous orthogonality of immanant functions of Weyl group of any $A_n$.

\begin{theorem}\label{maintheorem}
Let $W(A_n)$ be the Weyl group of $A_n$ and let $F$ be the fundamental domain of the corresponding affine Weyl group. We denote $\widetilde{F}=\bigcup_{w\in W(A_n)} w F$. Then for every $0\neq \lambda,\mu\in P^{+}$ and every $k,l\in\{1,\ldots,n_\chi\}$ the following relation holds.
\begin{equation*}
\int_{\widetilde{F}} \Imm^{n+1,k}(\lambda,x) \overline{\Imm^{n+1,l}(\mu,x)} dx=
|W(A_n)|^2|F|\delta_{\lambda\mu}\delta_{kl}\frac{1}{d_k} \sum_{w\in\stab{W(A_n)}(\lambda)} \chi_k(w)\;\;,
\end{equation*}
where $d_k=\chi_k(id)$.

In particular, for $\lambda,\mu\in P^{++}$ it holds that
\begin{equation*}
\int_{\widetilde{F}} \Imm^{n+1,k}(\lambda,x) \overline{\Imm^{n+1,l}(\mu,x)} dx=
|F||W(A_n)|^2\delta_{kl}\delta_{\lambda \mu}\;\;.
\end{equation*}
\end{theorem}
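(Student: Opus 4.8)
The plan is to unfold the integral over $\widetilde{F}$ into a sum of integrals over the single fundamental domain $F$, using that the $|W(A_n)|$ copies $wF$ tile $\widetilde{F}$ with pairwise disjoint interiors (overlaps occur only on lower-dimensional faces, of measure zero). Since each $w\in W(A_n)$ acts on $\mathcal H$ as an orthogonal transformation, the substitution $x\mapsto wx$ has unit Jacobian, so with $f(x)=\Imm^{n+1,k}(\lambda,x)\overline{\Imm^{n+1,l}(\mu,x)}$ one gets
$$\int_{\widetilde{F}} f(x)\,dx=\sum_{w\in W(A_n)}\int_F f(wx)\,dx=\int_F\sum_{w\in W(A_n)}f(wx)\,dx.$$
Applying the first symmetry in~\eqref{sym} in the form $\Imm^{n+1,k}(\lambda,wx)=\Imm^{n+1,k}(w^{-1}\lambda,x)$ and then reindexing $w\mapsto w^{-1}$, the inner sum becomes exactly the left-hand side of the second identity of Lemma~\ref{lemma}.

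Next I would invoke Lemma~\ref{lemma} to collapse this $w$-sum into a double sum of $C$-functions,
$$\sum_{w\in W(A_n)}f(wx)=\sum_{\tilde w,\hat w\in W(A_n)}\chi_k(\tilde w)\chi_l(\hat w)\,\Imm^{n+1,1}(\lambda-\tilde w\hat w\mu,x),$$
and integrate term by term over $F$. By the $C$-function integral~\eqref{ImmOG}, the integral of $\Imm^{n+1,1}(\nu,x)$ over $F$ equals $|F||W(A_n)|\delta_{\nu 0}$, so only the summands with $\lambda=\tilde w\hat w\mu$ survive, each contributing the constant $|F||W(A_n)|$.

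The decisive step is the evaluation of the remaining character sum. Because $\lambda,\mu\in P^+$ are dominant, the equality $\lambda=\tilde w\hat w\mu$ can hold only when $\lambda$ and $\mu$ lie in the same Weyl orbit; as the dominant representative of an orbit is unique, this forces $\lambda=\mu$ and produces the factor $\delta_{\lambda\mu}$. The surviving constraint then reads $\tilde w\hat w\in\stab_{W(A_n)}(\lambda)$. Setting $g=\tilde w\hat w$, so that $\hat w=\tilde w^{-1}g$, and summing over $\tilde w$ for each fixed $g$, I would apply the convolution of characters~\eqref{conv}; the substitution $a=\tilde w^{-1}g$ matches the sum to $\sum_a\chi_k(ga^{-1})\chi_l(a)=\delta_{kl}\frac{|W(A_n)|}{d_k}\chi_k(g)$. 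This yields the Kronecker delta $\delta_{kl}$, the prefactor $|W(A_n)|/d_k$, and a residual sum $\sum_{g\in\stab(\lambda)}\chi_k(g)$; collecting all constants gives precisely $|W(A_n)|^2|F|\delta_{\lambda\mu}\delta_{kl}\,d_k^{-1}\sum_{w\in\stab(\lambda)}\chi_k(w)$. The specialization to $\lambda,\mu\in P^{++}$ is then immediate, since a strictly dominant weight has trivial stabilizer, whence the residual sum reduces to $\chi_k(id)=d_k$ and the prefactor $d_k^{-1}\sum\chi_k$ collapses to $1$.

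I expect the main obstacle to be the correct handling of the character sum under the dominance constraint: recognizing that dominance forces $\lambda=\mu$ and converts the orbit condition $\lambda=\tilde w\hat w\mu$ into membership $\tilde w\hat w\in\stab(\lambda)$, and then matching the resulting double sum to the convolution identity~\eqref{conv} via the right substitution. By contrast, the measure-theoretic unfolding and the unit-Jacobian change of variables are routine once one observes that the boundary overlaps among the simplices $wF$ are negligible.
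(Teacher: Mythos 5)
Your proposal is correct and follows essentially the same route as the paper's proof: unfold $\int_{\widetilde F}$ into $|W(A_n)|$ integrals over $F$ via the symmetries \eqref{sym}, collapse the Weyl-group sum with Lemma~\ref{lemma}, kill all terms except $\lambda=w\mu$ (hence $\lambda=\mu$ with $w\in\stab_{W(A_n)}(\lambda)$ by dominance) using \eqref{ImmOG}, and evaluate the remaining double character sum with the convolution identity \eqref{conv}. Your treatment of the unfolding step (unit Jacobian, measure-zero overlaps of the simplices $wF$) is merely a more explicit rendering of what the paper compresses into its first displayed equation.
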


\begin{proof}

Using the symmetries of immanant functions~\eqref{sym} we can write
$$\begin{aligned}
\bigcup_{w\in W(A_n)}\int\limits_{F} \Imm^{n+1,k}(\lambda,x) \overline{\Imm^{n+1,l}(\mu,x)} dx
&=\sum_{w\in W(A_n)}\int\limits_{F} \Imm^{n+1,k}(w\lambda,x) \overline{\Imm^{n+1,l}(w\mu,x)} dx\\
&=\int\limits_{F} \sum_{w\in W(A_n)} \Imm^{n+1,k}(w\lambda,x) \overline{\Imm^{n+1,l}(w\mu,x)} dx.\\
\end{aligned}$$

From Lemma~\ref{lemma} we get
$$\begin{aligned}
\int\limits_{F} \sum_{w\in W(A_n)} \Imm^{n+1,k}(w\lambda,x) \overline{\Imm^{n+1,l}(w\mu,x)} dx&=\int\limits_{F}\sum_{\tilde{w},\hat{w}\in W(A_n)} \chi_k(\tilde{w})\chi_l(\hat{w})\Imm^{n+1,1}\left (\lambda-\tilde{w}\hat{w}\mu,x\right )dx\\
&= \sum_{w,\hat{w}\in W(A_n)} \chi_k(w\hat{w}^{-1})\chi_l(\hat{w})\int\limits_{F}\Imm^{n+1,1}\left (\lambda-w\mu,x\right )dx\;\;.
\end{aligned}$$
In the last equation we substituted $w=\tilde{w}\hat{w}$.

According to equation~\eqref{ImmOG}, we need to investigate when the expression $\lambda-w\mu$ equals zero. Since $\lambda,\mu\in P^+$, this can happen only for $\lambda=\mu$ and $w\in\stab_{W(A_n)}(\lambda)$. Using the relations~\eqref{ImmOG} and~\eqref{conv} we get
$$\begin{aligned}
\sum_{w,\hat{w}\in W(A_n)} \chi_k(w\hat{w}^{-1})\chi_l(\hat{w})\int\limits_{F}\Imm^{n+1,1}\left (\lambda-w\mu,x\right )dx &=|W(A_n)||F|\delta_{\lambda\mu} \sum_{w\in\stab_{W(A_n)}(\lambda)} \sum_{\hat{w}\in W(A_n)} \chi_k(w\hat{w}^{-1})\chi_l(\hat{w})\\
&=
|W(A_n)|^2|F|\delta_{\lambda\mu}\delta_{kl}\frac{1}{d_k} \sum_{w\in\stab_{W(A_n)}(\lambda)} \chi_k(w).
\end{aligned}$$

For $\lambda,\mu\in P^{++}$ the stabilizer is trivial and the sum in above expression equals $d_k$.

\end{proof}
%%%%%%%%%%%%%%%%%%%%%%%%%%%%%%%%%%%%%%%%%%%%%%%%

%%%%%%%%%%%%%%%%%%%%%%%%%%%%%%%%%%%%%%%%%%%%%%%%
\section{Concluding remarks}
%%%%%%%%%%%%%%%%%%%%%%%%%%%%%%%%%%%%%%%%%%%%%%%%
\begin{itemize}
\item The formula giving the volume of $F$ for every Weyl group can be found in~\cite{HrPa01}.

\smallskip

\item The orthogonality relation given by Theorem~\ref{maintheorem} holds also for $C$ and $S$ orbit functions, as they are just immanant functions $\Imm^{n+1,1}$ and $\Imm^{n+1,2}$. Nevertheless, their orthogonality is fulfilled on smaller region, see~\eqref{COG} and~\eqref{SOG}.

\smallskip

\item A natural question to ask is whether one can write also discrete orthogonality relations for immanants functions similar to those for orbit functions.
\end{itemize}

%%%%%%%%%%%%%%%%%%%%%%%%%%%%%%%%%%%%%%%%%%%%%%%%
\section{Acknowledgments}
%%%%%%%%%%%%%%%%%%%%%%%%%%%%%%%%%%%%%%%%%%%%%%%%
The authors would like to express their very great appreciation to Hubert de Guise for the inspiration for this work and his valuable comments on the manuscript.

This publication was supported by the European social fund within the framework of realizing the project „Support of inter-sectoral mobility and quality enhancement of research teams at Czech Technical University in Prague“, CZ.1.07/2.3.00/30.0034.

%%%%%%%%%%%%%%%%%%%%%%%%%%%%%%%%%%%%%%%%%%%%%%%%
\bibliographystyle{amsplain}

\end{document}